\documentclass[12pt]{iopart}

\usepackage[english]{babel}
\usepackage[utf8]{inputenc}
\usepackage[colorinlistoftodos, color=green!40, prependcaption]{todonotes}
\usepackage{color}
\usepackage{amsthm}
\usepackage{mathtools}
\usepackage{amsmath}
\usepackage{mathrsfs}
\usepackage{physics} 
\usepackage{xcolor}
\usepackage{graphicx}
\usepackage{soul}
\usepackage{bm}
\usepackage{comment}
\usepackage{MnSymbol}%
\usepackage{wasysym}
\usepackage{cite}
\usepackage{mathbbol}
\usepackage[pdftex, pdftitle={Article}, pdfauthor={Author}]{hyperref} 

\DeclareMathOperator*{\argmin}{arg\,min}
\newtheorem{theorem}{Theorem}
\theoremstyle{definition}
\newtheorem{remark}[theorem]{Remark}

\newtheorem{proposition}[theorem]{Proposition}

\begin{document}
\title{Progress towards analytically optimal angles in quantum approximate optimisation} 

\author{D.~Rabinovich$^*$, R.~Sengupta, E.~Campos, V.~Akshay, and J.~Biamonte}
\address{Skolkovo Institute of Science and Technology, 3 Nobel Street, Moscow, Russian Federation 121205}
\ead{$^*$daniil.rabinovich@skoltech.ru}
    
\begin{abstract}
The Quantum Approximate Optimisation Algorithm is a $p$ layer, time variable split operator method executed on a quantum processor and driven to convergence by classical outer loop optimisation.  The classical co-processor varies individual application times of a problem/driver propagator sequence to prepare a state which approximately minimizes the problem's generator.  Analytical solutions to choose optimal application times (called angles) have proven difficult to find, whereas outer loop optimisation is resource intensive.  Here we prove that optimal Quantum Approximate Optimisation Algorithm parameters for $p=1$ layer reduce to one free variable and in the thermodynamic limit, we recover optimal angles.  We moreover demonstrate that conditions for vanishing gradients of the overlap function share a similar form which leads to a linear relation between circuit parameters, independent on the number of qubits. Finally, we present a list of numerical effects, observed for particular system size and circuit depth, which are yet to be explained analytically. 
\end{abstract}

\section{Introduction}
The field of quantum algorithms has dramatically transformed in the last few years due to the advent of a quantum to classical feedback loop: a fixed depth quantum circuit is adjusted to minimize a cost function.  This approach partially circumvents certain limitations such as variability in pulse timing and requires shorter depth circuits at the cost of outer loop training \cite{harrigan2021quantum,pagano2019quantum, guerreschi2019qaoa,butko2020understanding, Biamonte_2021, campos2020abrupt}. The most studied algorithm in this setting is the Quantum Approximate Optimisation Algorithm (QAOA) \cite{Farhi2014} which was developed to approximate solutions to combinatiorial optimisation problem instances \cite{niu2019optimizing,Farhi2014,lloyd2018quantum,morales2020universality,Zhou2020,wang2020x,Brady2021,Farhi2016,Akshay2020,Farhi2019a,Wauters2020,Claes2021,Zhou}.

The setting of QAOA is that of $n$ qubits: states are represented as vectors in $V_n = [\mathbb{C}^2]^{\otimes n}$.  We are given a non-negative Hamiltonian $P \in \text{herm}_\mathbb{C}(V_n)$ and we seek the normalized ground vector $\ket{t}\in \argmin\limits_{\phi \in \{0,1\}^n} \bra{\phi}P\ket{\phi}$.

QAOA might be viewed as a (time variable fixed depth) quantum split operator method.  We let ${\mathcal V}(\gamma)$ be the propagator of $P$ applied for time $\gamma$.  We consider a second propagator ${\mathcal U}(\beta)$ generated by applying a yet to be defined Hamiltonian $H_x$ for time $\beta$.  We start off in the equal superposition state $\ket{+}^{\otimes{n}}=2^{-n/2}(\ket{0}+\ket{1})^{\otimes n}$ and form a $p$-depth ${\mathcal U}$, ${\mathcal V}$ sequence:
\begin{align}
    |g_p({\bf \gamma}, {\bf \beta})|^2 = |\bra{t}\Pi_{k=1}^p [{\mathcal U}(\beta_k){\mathcal V}(\gamma_k)]\ket{+}^{\otimes n}|^2. 
\end{align}
The time of application of each propagator is varied to maximize preparation of the state $\ket{t}$.  Finding ${\bf \gamma}, {\bf \beta}$ to maximize $|g_p({\bf \gamma}, {\bf \beta})|$ has shown to be cumbersome.  Even lacking such solutions, much progress has been made.  

Recent milestones include experimental demonstration of $p=3$ depth QAOA (corresponding to six tunable parameters) using twenty three qubits \cite{harrigan2021quantum}, universality results \cite{lloyd2018quantum,morales2020universality}, as well as several results that aid and improve on the original implementation of the algorithm \cite{Zhou2020,wang2020x,Brady2021}. Although QAOA exhibits provable advantages such as recovering a near optimal query complexity in Grover's search \cite{Jiang2017a} and offers a pathway towards quantum advantage \cite{Farhi2016}, several limitations have been discovered for low depth QAOA \cite{Akshay2020,hastings2019classical,Bravyi2019}. 

In the setting of Maximum-Constraint-Satifiability (e.g.~minimizing a Hamiltonian representing a function of type $f:\{0, 1\}^n\rightarrow \mathbb{R}_+$), it has been shown that underparameterisation of QAOA sequences can be induced by increasing a problem instances constraint to variable ratio \cite{Akshay2020}. This effect persists in graph minimisation problems \cite{Akshay_2021_google}. While this effect is perhaps an expected limitation of the quantum algorithm, parameter concentrations and noise assisted training add a degree of optimism.  QAOA exhibits parameter concentrations, in which training for some fraction of $\omega<n$ qubits provides a training sequence for $n$ qubits \cite{akshay2021parameter}.  Moreover, whereas layerwise training saturates for QAOA in which the algorithm plateaus and fails to reach the target, local coherent noise recovers layerwise trainings robustness \cite{campos2021training}.  Both concentrations and noise assisted training imply a reduction in computational resources required in outerloop optimisation. 

Exact solutions to find the optimal parameters for QAOA have only been possible in special cases including, e.g.~fully connected graphs \cite{Farhi2019a, Wauters2020,Claes2021} and projectors \cite{akshay2021parameter}. A general analytical approach that allows for (i) calculation of optimal parameters, (ii) estimation of the critical circuit depth and (iii) performance guarantees for fixed depth remains open.

Here we prove that optimal QAOA parameters for $p=1$ are related as $\gamma_1 = \pi - 2\beta_1$ and in the thermodynamic limit, we recover optimality as $\beta_1 n \rightarrow \pi$ and $\gamma_1 \rightarrow \pi$.  We moreover demonstrate that conditions for vanishing gradients of the overlap function share a similar form which leads to a linear relation between circuit parameters, independent of the number of qubits. We hence devise an additional means to recover parameter concentrations~\cite{akshay2021parameter} analytically. Finally, we present a list of numerical effects, observed for particular system size and circuit depth, which are yet to be explained analytically.


\section{Quantum Approximate Optimisation Algorithm}
\label{description}
We consider an $n$-qubit complex vector space $V_n = [\mathbb{C}^2]^{\otimes n}\cong \mathbb{C}^{2^n}$ with fixed standard computational basis $B_n=\{\ket{0},\ket{1}\}^{\otimes n}$. For an arbitrary target state $\ket{t}\in B_n$ (equivalently $\ket{t}, t\in \{0,1\}^{\times n}$) we define propagators
\begin{align}
    \mathcal U(\beta)\equiv e^{-i\beta H_x},~
  \mathcal V(\gamma)\equiv e^{-i\gamma P},
  \end{align}
where $P=\ketbra{t}{t}$ and ${H}_x = \sum_{j=1}^{n} X_{j}$ is the one-body mixer Hamiltonian with $X_j$ the Pauli matrix acting non-trivially on the $j$-th qubit.

A $p$-depth ($p$ layer) QAOA circuit prepares a quantum state $\ket{\psi}$ as:
\begin{align}
    \ket{\psi_p(\bm\gamma,\bm\beta)} =  \prod\limits_{k=1}^p [{\mathcal U}(\beta_k){\mathcal V}(\gamma_k)]\ket{+}^{\otimes{n}},
    \label{ansatz}
\end{align} 
where $\gamma_k\in[0,2\pi)$, $\beta_k\in[0,\pi)$.
The optimisation task is to determine QAOA optimal parameters for which the state prepared in \eqref{ansatz} achieves maximum absolute value of the overlap $g_p(\bm \gamma, \bm \beta)=\braket{t}{\psi_p(\bm \gamma, \bm \beta)}$  with the target $\ket{t}$. In other words, we search for 
\begin{align}
    (\bm \gamma_{opt},\bm \beta_{opt}) \in \arg\max_{\bm \gamma, \bm \beta} \abs{g_p(\bm \gamma,\bm \beta)}
    \label{argmax}
\end{align}
Note that the problem is equivalent to the minimization of the ground state energy of Hamiltonian $P^{\perp}  = \mathbb{1} - \ketbra{t}{t}$,
\begin{align}
    \min_{\bm\gamma,\bm\beta} \bra{\psi_p ( \bm\gamma, \bm\beta)} P^{\perp} \ket{\psi_p (\bm\gamma, \bm\beta)} = 1 - \max_{\bm\gamma,\bm\beta} \abs{g_p(\bm \gamma, \bm \beta)}^2.
    \label{equival}
\end{align}

\begin{remark}[Inversion symmetry]
Under the affine transformation 
\begin{align}
    (\gamma,\beta)\to(2\pi-\gamma,\pi-\beta)
    \label{affine}
\end{align} 
the absolute value of the overlap remains invariant as $g_p\to (-1)^n g_p^*$. Therefore, this narrows the search space to $\gamma_k\in[0,\pi)$, $\beta_k\in[0,\pi)$, whereas maximums inside the restricted region determine maximums in the composite space using Eq. \eqref{affine}.
\end{remark}

\begin{proposition}[Overlap invariance]
\label{invar}
The overlap function $g_p(\bm\gamma, \bm\beta)$ is invariant with respect to $\ket{t}\in B_n$. 
\end{proposition}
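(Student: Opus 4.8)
The plan is to exhibit, for any two computational basis states $\ket{t},\ket{t'}\in B_n$, an explicit unitary that intertwines the entire QAOA construction for target $\ket{t}$ with the one for target $\ket{t'}$, while fixing every object that does not reference the target. Writing $t,t'\in\{0,1\}^n$ and letting $s=t\oplus t'$ denote their bitwise XOR, the natural candidate is the bit-flip operator $X_s=\prod_{j:\,s_j=1}X_j$, the product of Pauli-$X$ matrices on exactly the qubits where $t$ and $t'$ disagree. First I would record its elementary algebraic properties: $X_s$ is simultaneously unitary and Hermitian with $X_s^2=\mathbb{1}$, and by construction $X_s\ket{t}=\ket{t'}$.

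The three structural facts I would verify next are: (i) $X_s$ stabilises the initial state, $X_s\ket{+}^{\otimes n}=\ket{+}^{\otimes n}$, since $X\ket{+}=\ket{+}$ on every qubit; (ii) $X_s$ commutes with the mixer, $[X_s,H_x]=0$, because $H_x=\sum_j X_j$ is a sum of Pauli-$X$ operators and all $X_j$ mutually commute, whence $X_s\,\mathcal{U}(\beta)\,X_s=\mathcal{U}(\beta)$; and (iii) conjugation by $X_s$ carries the target projector for $t$ to the one for $t'$, i.e. $X_s\ketbra{t}{t}X_s=\ketbra{t'}{t'}$, so that $X_s\,\mathcal{V}^{(t)}(\gamma)\,X_s=\mathcal{V}^{(t')}(\gamma)$, where the superscript records which target was used to build the phase-separation propagator. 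The essential point is that only $\mathcal{V}$ carries target dependence, whereas $\mathcal{U}$ and $\ket{+}^{\otimes n}$ do not.

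With these in hand the argument reduces to a one-line conjugation. Because conjugation by the unitary $X_s$ is an algebra homomorphism, it distributes across the product, and applying (ii) and (iii) factor by factor gives
\begin{equation}
X_s \left(\prod_{k=1}^p [\mathcal{U}(\beta_k)\,\mathcal{V}^{(t)}(\gamma_k)]\right) X_s = \prod_{k=1}^p [\mathcal{U}(\beta_k)\,\mathcal{V}^{(t')}(\gamma_k)].
\end{equation}
I would then insert this into the overlap built with target $\ket{t'}$, write $\bra{t'}=\bra{t}X_s$, and use $X_s^2=\mathbb{1}$ together with the stabiliser property (i) to annihilate the two boundary copies of $X_s$, obtaining $g_p^{(t')}(\bm\gamma,\bm\beta)=g_p^{(t)}(\bm\gamma,\bm\beta)$. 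Since $t,t'$ are arbitrary, this yields the claimed invariance, and in fact the stronger statement that $g_p$ itself (not merely $\abs{g_p}$) is independent of the choice of $\ket{t}\in B_n$.

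As for difficulty, there is no serious obstacle here: the entire content lies in identifying the correct intertwining operator $X_s$. The only steps that require genuine, if brief, care are verifying that $X_s$ is at once a fixed point of $\ket{+}^{\otimes n}$ and commutes with $H_x$ — the two properties that let the boundary flips vanish — and confirming that the mixer truly carries no target dependence, so that conjugation touches only the $\mathcal{V}$ factors. I would therefore spend the proof making (i)–(iii) explicit and treat the final conjugation as immediate.
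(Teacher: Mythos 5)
Your proposal is correct and matches the paper's own argument: the paper conjugates by $U=\bigotimes_j X_j^{t_j}$, which is precisely your $X_s$ for the pair $(t,\bm 0)$, and relies on the same three facts (stabilisation of $\ket{+}^{\otimes n}$, commutation with $H_x$, and conjugation of the target projector inside the exponential). Your version comparing two arbitrary targets directly is only a cosmetic generalisation of the paper's reduction to $\ket{\bm 0}$.
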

\begin{proof}
Each $\ket{t}=\ket{t_1t_2\dots t_n}\in B_n$ determines a unitary operator $U=U^\dagger=\bigotimes_{j=1}^{n}X_j^{t_j}.$ Hence, we have 
\begin{align}
    g_p(\bm\gamma,\bm\beta)=& \bra{\bm 0}U^\dagger \prod\limits_{k=1}^p e^{-i \beta_k  H_x} e^{-i \gamma_k U(\ketbra{\bm 0}{\bm 0})U^\dagger}\ket{+}^{\otimes{n}}\nonumber\\
    =&\bra{\bm 0}U^\dagger \prod\limits_{k=1}^p e^{-i \beta_k  H_x} [Ue^{-i \gamma_k (\ketbra{\bm 0}{\bm 0})}U^\dagger]\ket{+}^{\otimes{n}}\nonumber\\
    =&\bra{\bm 0} \prod\limits_{k=1}^p e^{-i \beta_k  H_x} e^{-i \gamma_k \ketbra{\bm 0}{\bm 0}}\ket{+}^{\otimes{n}}.
\end{align}
The first equality follows from $U\ket{\bm 0}=\ket{t}$ where $\ket{\bm 0} = \ket{0}^{\otimes n}$.
The second equality follows from the definition of the matrix exponential. The third equality follows as $U$ commutes with $H_x$ as does any analytic function of $H_x,$
and $U\ket{+}^{\otimes n}=\ket{+}^{\otimes n}.$ Thus, the overlap is seen to be target independent.

\end{proof}

\begin{remark}
Overlap invariance introduced in Proposition \ref{invar} shows that optimisation problems in Eqs.~\eqref{argmax} and~\eqref{equival} do not depend on the target. Therefore, optimal parameters are the same for any target state. 
Thus, with no loss of generality we limit our consideration to the target $\ket t = \ket {\bm 0}$.
\end{remark}


Preparation of state \eqref{ansatz} requires a strategy to assign $2p$ variational parameters by outerloop optimisation.

\begin{remark}[Global optimisation] A strategy when all $2p$ parameters are optimized simultaneously which might provide the best approximation to prepare $\ket{t}$. 
\end{remark}

\begin{remark}[Layerwise training]
Optimisation of parameters layer by layer. At each step after a layer is trained, all parameters are fixed. A new layer is added and only the parameters corresponding to the new layer are optimized. 
\end{remark}

Global optimisation is evidently challenging for high depth circuits. The optimisation can, in principle, be simplified by exploiting problem symmetries \cite{shaydulin2021exploiting} and leveraging parameter concentrations \cite{akshay2021parameter,streif2019comparison}. Layerwise training might avoid barren plateaus \cite{skolik2021layerwise} yet is known~\cite{campos2021training} to stagnate at some critical depth, past which additional layers (trained one at a time) do not improve overlap. Local coherent noise was found to re-establish the robustness of layerwise training \cite{campos2021training}.

\section{$p=1$ QAOA}
For a single layer, the global and layerwise strategies are equivalent. Such a circuit was considered to establish parameter concentrations \cite{akshay2021parameter} analytically. The overlap was shown to be:
\begin{align}
     \abs{g_1(\gamma,\beta)}^2 =\dfrac{1}{2^n} \big[1&+ 2 \cos^{n}{\beta}\left(\cos{(\gamma - n\beta)} - \cos{n\beta}\right) + 2\cos^{2n}{\beta}\left( 1 - \cos{\gamma} \right) \big]. 
    \label{algebraic_overlap}
 \end{align}
To find extreme points of \eqref{algebraic_overlap} the authors in \cite{akshay2021parameter} set the derivatives with respect to $\gamma$ and $\beta$ to zero. This approach leads to solutions which contain maxima but also minimum of the overlap \eqref{algebraic_overlap}. These must be carefully separated. Moreover, this approach ignores the operator structure of the overlap as presented here. For aesthetics, subscript {\it opt} in $\bm \gamma_{opt}$ and $\bm \beta_{opt}$ is further omitted. 

\begin{theorem}
\label{linear_relation}
Optimal $p=1$ QAOA parameters relate as $\gamma=\pi-2\beta$.
\end{theorem}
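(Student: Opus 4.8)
The plan is to exploit the operator structure emphasised above rather than to brute-force differentiate \eqref{algebraic_overlap}. Setting $\ket{t}=\ket{\bm 0}$ by Proposition \ref{invar}, and using $\mathcal V(\gamma)\ket{+}^{\otimes n}=\ket{+}^{\otimes n}+2^{-n/2}(e^{-i\gamma}-1)\ket{\bm 0}$ together with $\mathcal U(\beta)\ket{+}^{\otimes n}=e^{-in\beta}\ket{+}^{\otimes n}$ and $\bra{\bm 0}\mathcal U(\beta)\ket{\bm 0}=\cos^n\beta$, I would first collapse the overlap to the single complex scalar $g_1=2^{-n/2}\big[e^{-in\beta}+(e^{-i\gamma}-1)\cos^n\beta\big]$. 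This is the compact form from which \eqref{algebraic_overlap} follows by taking the squared modulus, and it exposes the geometry that makes the extremisation transparent.

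Next I would read off the $\gamma$-dependence geometrically. Writing $z=2^{n/2}g_1=w+\cos^n\beta\,e^{-i\gamma}$ with fixed centre $w=e^{-in\beta}-\cos^n\beta$, the point $z$ traces a circle of radius $\cos^n\beta$ as $\gamma$ runs over $[0,2\pi)$. For fixed $\beta$ the modulus $\abs{z}$ therefore attains its two extrema when $e^{-i\gamma}$ is (anti)parallel to $w$: the maximum $\abs{z}=\abs{w}+\cos^n\beta$ at $e^{-i\gamma}=w/\abs{w}$ and the spurious minimum $\abs{z}=\big|\,\abs{w}-\cos^n\beta\,\big|$ on the opposite branch. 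Selecting the maximising branch is exactly the step that cleanly discards the extraneous minima which, as noted above, contaminate the derivative-based analysis of \cite{akshay2021parameter}. On this branch $\cos\gamma=(\cos n\beta-\cos^n\beta)/R$ and $\sin\gamma=\sin n\beta/R$, where $R=\abs{w}=\sqrt{1-2\cos^n\beta\cos n\beta+\cos^{2n}\beta}$.

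It then remains to optimise the one remaining variable $\beta$. Rather than substitute these values and differentiate the resulting surd, I would impose stationarity through the complex derivative $\partial_\beta\abs{z}^2=2\,\mathrm{Re}(\bar z\,\partial_\beta z)=0$; evaluated on the maximising branch this collapses, after cancelling the positive factor $(R+\cos^n\beta)/R$, to the single clean relation $\sin n\beta=\tan\beta\,(R-\cos n\beta+\cos^n\beta)$. Combining this with the identity $R^2=(\cos^n\beta-\cos n\beta)^2+\sin^2 n\beta$ eliminates $R$ and reduces the condition to $\sin((n+2)\beta)=\cos^n\beta\,\sin 2\beta$, which is equivalent to $\tan\gamma=-\tan 2\beta$ once the branch values of $\cos\gamma,\sin\gamma$ are inserted. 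A final sign check — $\cos\gamma<0$ together with $\sin\gamma$ matching $\sin 2\beta$ near the optimum — pins the solution to $\gamma=\pi-2\beta$ rather than the tangent-equivalent $\gamma=-2\beta$, which is the claim.

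I expect the main obstacle to be bookkeeping rather than conceptual: correctly isolating the maximising branch of the circle and verifying that the global maximiser is interior, so that both stationarity conditions genuinely hold there. Concretely one must restrict to the region $\cos\beta>0$ containing the maximiser and rule out the degenerate loci $\cos\beta=0$, $\sin n\beta=0$ and $\tan\beta=0$, where the divisions above break down; at each of these $\abs{g_1}^2$ collapses to $2^{-n}$ (or is otherwise non-maximal), so they are excluded by exhibiting a single point of strictly larger overlap. The secondary subtlety is the final quadrant argument fixing the additive constant $\pi$, which I would settle using the thermodynamic-limit values $n\beta\to\pi$, $\gamma\to\pi$ as the anchoring reference.
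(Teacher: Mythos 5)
Your route is genuinely different from the paper's. The paper works at the operator level: it sets $\partial_\gamma\abs{g}^2=\partial_\beta\abs{g}^2=0$, massages the two resulting operator identities into the argument conditions $\arg g=-\gamma$ and $\arg A=(\gamma+\pi)/2$ with $A=\bra{+}^{\otimes n}[P,H_x]e^{i\beta H_x}\ket{\bm 0}$, and then computes $\arg A=\pi-\beta$ explicitly; combining the two fixes $\gamma=\pi-2\beta$ modulo $2\pi$ at \emph{every} non-degenerate stationary point, with no need to know where the maximum actually sits. You instead collapse the overlap to the scalar \eqref{g_explicit}, eliminate $\gamma$ by the alignment argument (which the paper only uses, in a remark, to recover the first condition and explicitly says it could not extend to the $\beta$ condition), and then differentiate the resulting one-variable function $R+\cos^n\beta$. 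Your intermediate computations are correct: the stationarity relation $\sin n\beta=\tan\beta\,(R-\cos n\beta+\cos^n\beta)$ does reduce, after eliminating $R$, to $\sin((n+2)\beta)=\cos^n\beta\sin 2\beta$, which is exactly the paper's Eq.~\eqref{eq_beta} — obtained there only \emph{after} Theorem \ref{linear_relation}, by substitution. So your approach buys the transcendental equation for the optimal $\beta$ (the input to Theorem \ref{opt_beta}) as a by-product, and it cleanly discards the spurious minima over $\gamma$; what it gives up is the paper's automatic resolution of the additive constant, since $\tan\gamma=-\tan 2\beta$ only pins $\gamma+2\beta$ modulo $\pi$.

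That last step is where your writeup has a genuine gap: you propose to select $\gamma=\pi-2\beta$ over $\gamma=2\pi-2\beta$ by "anchoring" to the thermodynamic-limit values $n\beta\to\pi$, $\gamma\to\pi$. That is not a proof for finite $n$ — the theorem is claimed for all $n$, and the asymptotics are only established afterwards (and would make the argument circular). Fortunately the fix is already contained in your own displayed equation. On the maximising branch $\sin\gamma=\sin n\beta/R$, so the ambiguity is exactly the sign of $\sin n\beta$ at the optimum. But $R^2=(\cos^n\beta-\cos n\beta)^2+\sin^2 n\beta$ gives $R\ge\cos n\beta-\cos^n\beta$, hence $R-\cos n\beta+\cos^n\beta\ge 0$, and with $\tan\beta>0$ on $(0,\pi/2)$ your stationarity relation forces $\sin n\beta\ge 0$ (strictly, away from the degenerate loci you already excluded). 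Therefore $\sin\gamma\ge 0$, which selects $\gamma=\pi-2\beta$ with no appeal to asymptotics. With that substitution, and with the justification you already sketch for confining the maximiser to $\cos\beta>0$, the argument closes.
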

\begin{proof}
To maximize the absolute value of the overlap 
\begin{align}
    g \equiv g_1(\gamma,\beta)=  \bra{\bm 0}e^{-i \beta  H_x} e^{-i \gamma P}\ket{+}^{\otimes{n}},
    \label{p=1_overlap}
\end{align} 
with $P=\ketbra{\bm 0}{\bm 0}$ we use the standard conditions $\dfrac{\partial(gg^*)}{\partial\gamma}=\dfrac{\partial(gg^*)}{\partial\beta}=0$. Setting the first derivative to zero we arrive at 
\begin{align}
    \bra{\bm 0}e^{-i \beta H_x}e^{-i \gamma P} P\ket{+}^{\otimes{n}}g^*=\bra{+}^{\otimes{n}}Pe^{i \gamma P}e^{i \beta H_x}\ket{\bm 0}g.
    \label{d_gamma}
\end{align}
Using the explicit form of the projector and the fact that $\bra{\bm 0}e^{-i \beta H_x} \ket{\bm 0}=\cos^n\beta$ Eq.~\eqref{d_gamma} simplifies into 
\begin{align}
    g=g^*e^{-2i \gamma} \Leftrightarrow ge^{i\gamma} = g^*e^{-i\gamma},
    \label{g_gamma}
\end{align}
which is equivalent to 
\begin{align}
    \arg g=-\gamma.
    \label{arg_g}
\end{align}
Then the derivative  of expression \eqref{p=1_overlap} with respect to $\beta$  is set to zero and we arrive at
\begin{align}
    \bra{\bm 0}e^{-i \beta H_x}H_xe^{-i \gamma P} \ket{+}^{\otimes{n}}g^*=\bra{+}^{\otimes{n}}e^{i \gamma P}H_xe^{i \beta H_x}\ket{\bm 0}g.
\end{align}
Moving $H_x$ next to its eigenstate $\ket{+}^{\otimes{n}}$ is compensated as follows:
\begin{align}
    \bra{\bm 0}e^{-i \beta H_x}\{e^{-i \gamma P} H_x+(e^{-i \gamma}-1)[H_x,P]\} \ket{+}^{\otimes{n}}g^*\nonumber\\
    =\bra{+}^{\otimes{n}}\{H_xe^{i \gamma P}+[P,H_x](e^{i \gamma}-1)\}e^{i \beta H_x}\ket{\bm 0}g.
    \label{d_beta}
\end{align}
After simplification (see remark  \ref{trivial_solutions}) we arrive at
\begin{align}
    -g A =g^*A^*e^{-i\gamma},
    \label{A_gamma}
\end{align}
where $A=\bra{+}^{\otimes{n}}[P,H_x]e^{i \beta H_x}\ket{\bm 0}$.
Now $g^*$ is substituted from Eq.~\eqref{g_gamma} to establish
\begin{align}
    -e^{-i\gamma}A=A^*.
\end{align}
Thus, similar to Eq.~\eqref{arg_g} we arrive at
\begin{align}
    \arg A=\dfrac{\gamma+\pi}{2}.
    \label{arg_A}
\end{align}
$A$ is calculated as 
\begin{align}
      A\sqrt{2^n}=\bra{\bm 0}(H_x-n) e^{i\beta H_x}\ket{\bm 0} =
     -n\cos^{n-1}\beta e^{-i\beta},
\end{align}
which shows that $\arg A = \pi -\beta$.
Thus, from Eq.~\eqref{arg_A} we arrive at 
\begin{align}
     \pi-\beta= \dfrac{\gamma+\pi}{2},
\end{align}
which finally establishes $\gamma=\pi-2\beta$.
\end{proof}

\begin{remark}[Trivial solutions]
Eq.~\eqref{d_beta} has three pathological solutions which must be ruled out: (i) $\sin\dfrac{\gamma}{2}=0$ (which sets $e^{i\gamma}-1=0$), (ii) $\cos\beta=0$ (which sets $A=0$), (iii) $g(\gamma, \beta)=0$. 
All three cases imply $\abs{g(\gamma,\beta)}\le g(0,0)$.
\label{trivial_solutions}
\end{remark}

\begin{remark}
The zero derivative conditions result in \eqref{g_gamma} and \eqref{A_gamma} which have a similar form, viz.~$x=x^*e^{i\varphi}$. The first condition \eqref{g_gamma} can be obtained without differentiation  \cite{campos2021training} using the explicit form of the overlap Eq.~\eqref{p=1_overlap} 
\begin{align}
    g\sqrt{2^n}=e^{-i\gamma}\cos^n\beta + (e^{-i\beta n} -\cos^n\beta),
    \label{g_explicit}
\end{align}
and the fact that $\max_{\gamma} \abs{A e^{-i \gamma}+B} = \abs{A}+\abs{B}$ for any $A,B\in \mathbb{C}$. 
Although the derivative with respect to $\beta$ leads to the condition \eqref{A_gamma}, we find no way to recover this using elementary {\it alignment arguments}. 
\end{remark}

To find optimal parameters one needs to solve the zero derivative conditions and then take solutions that deliver a global maximum to the overlap. For convenience, we substitute $\gamma=\pi-2\beta$ to the overlap function \eqref{g_explicit}, square it and after simplification arrive at 
\begin{align}
    \abs{g}^2 2^n = 1+4\cos^{n+1}\beta(\cos^{n+1}\beta-\cos(n+1)\beta),
    \label{g2_beta}
\end{align}
which is used to prove the next theorem. 

\begin{theorem}
\label{opt_beta}
The optimal $p=1$ QAOA parameters converge as $\beta n\to\pi$ and $\gamma\to\pi$ when $n\to\infty$.
\end{theorem}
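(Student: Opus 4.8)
The plan is to use Theorem~\ref{linear_relation} to collapse the problem onto a single variable. Substituting $\gamma=\pi-2\beta$ has already produced \eqref{g2_beta}, so maximising $\abs{g}^2$ is equivalent to maximising
\begin{align}
    f(\beta)=\cos^{n+1}\beta\,\big(\cos^{n+1}\beta-\cos((n+1)\beta)\big)
\end{align}
over the admissible interval $\beta\in(0,\pi/2]$ (the range in which $\gamma=\pi-2\beta\in[0,\pi)$). It then suffices to show that the maximiser $\beta_n$ obeys $n\beta_n\to\pi$, since $\gamma_n=\pi-2\beta_n\to\pi$ follows immediately.

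First I would localise the maximiser. For any fixed $\delta>0$ and $\beta\in[\delta,\pi/2]$ one has $\cos^{n+1}\beta\le(\cos\delta)^{n+1}\to0$, so $\sup_{\beta\ge\delta}\abs{f(\beta)}\to0$; on the other hand, evaluating at $\beta=\pi/(n+1)$ gives $f\to1-\cos\pi=2$. Hence the maximum value tends to $2$ and $\beta_n\to0$, which justifies the rescaling $\beta=\theta/n$. Writing $a=\cos^{n+1}\beta\in(0,1]$ and $c=\cos((n+1)\beta)\in[-1,1]$, the elementary bound $f=a^2-ac\le a^2+a$, with equality iff $c=-1$, forces $f(\beta_n)\to2$ to entail both $a\to1$ and $c\to-1$. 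The first gives $(n+1)\beta_n^2\to0$, i.e. $\theta_n=o(\sqrt n)$, while the second gives $(n+1)\beta_n\to(2k_n+1)\pi$ for integers $k_n$, so $\theta_n=(2k_n+1)\pi+o(1)$.

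The main obstacle is that the leading order does not by itself fix $k_n=0$: every odd multiple $\theta=(2k+1)\pi$ makes the limit $1-\cos\theta$ equal $2$, so the peaks are degenerate at leading order and a second-order comparison is needed to select the first one. Near the $k$-th peak $c\approx-1$, so $f\approx a^2+a$ with $a=\cos^{n+1}\beta\approx\exp\!\big(-((2k+1)\pi)^2/(2n)\big)$; since $a\mapsto a^2+a$ is increasing and $a$ is largest at the smallest peak location, the global maximum is attained at $k=0$. Equivalently, expanding in the rescaled variable yields
\begin{align}
    f(\theta/n)=(1-\cos\theta)+\frac1n\Big(\theta\sin\theta+\tfrac12\theta^2\cos\theta-\theta^2\Big)+o(1/n),
\end{align}
whose $O(1/n)$ coefficient equals $-\tfrac32\theta^2$ at $\theta=(2k+1)\pi$ and is therefore largest (least negative) at $\theta=\pi$. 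Either computation gives $k_n=0$ for large $n$, hence $n\beta_n\to\pi$ and $\gamma_n=\pi-2\beta_n\to\pi$. A cross-check is available through the stationarity equation obtained by combining $\arg g=-\gamma$ with $\gamma=\pi-2\beta$, which reduces to $2\sin\beta\,\cos^{n+1}\beta=\sin((n+2)\beta)$; in the rescaled variable this enforces $\sin\theta\to0$, and discarding the spurious roots by comparing the values of \eqref{g2_beta} again isolates $\theta=\pi$.
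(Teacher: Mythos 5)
Your proof is correct, and it reaches the result by a route that differs meaningfully from the paper's. The paper first derives the stationarity condition $\sin(n+2)\beta=\sin2\beta\cos^n\beta$ (Eq.~\eqref{eq_beta}) from the zero-derivative relation \eqref{g_gamma}, reads off the candidate roots $\beta=k\pi/(n+2)+o(n^{-1})$, and only then substitutes into \eqref{g2_beta} to select the smallest odd $k$; you bypass the stationarity equation entirely (your cross-check equation $2\sin\beta\cos^{n+1}\beta=\sin((n+2)\beta)$ is identical to \eqref{eq_beta} after the double-angle identity) and instead maximize \eqref{g2_beta} directly, localizing the maximiser via $\sup_{\beta\ge\delta}\abs{f}\to0$, extracting $a\to1$, $c\to-1$ from the bound $f=a^2-ac\le a^2+a$, and then breaking the leading-order degeneracy among the peaks $\theta=(2k+1)\pi$ with an explicit $O(1/n)$ comparison. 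What your version buys is precisely the step the paper treats most loosely: the candidates in \eqref{g2_k} all converge to the same limiting value $9/2^n\cdot 2^n$-normalised overlap, so choosing among them requires comparing quantities that differ only at order $1/n$, and the paper's appeal to monotonicity of the cosine inside a formula stated ``up to $o(1)$ terms'' glosses over whether the neglected errors are smaller than the gap being exploited; your expansion $f(\theta/n)=(1-\cos\theta)+\tfrac1n(\theta\sin\theta+\tfrac12\theta^2\cos\theta-\theta^2)+o(1/n)$, whose coefficient is $-\tfrac32\theta^2$ at the peaks, makes that selection explicit (and your first variant, comparing $a\approx\exp(-((2k+1)\pi)^2/(2n))$ across peaks, also covers the case of $k_n$ growing with $n$, which a fixed-$\theta$ expansion alone does not). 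The paper's route, in exchange, keeps contact with the operator-level conditions \eqref{g_gamma} and \eqref{arg_g} that it reuses elsewhere. One small point to make explicit if you polish this: the reduction to maximizing $f$ on $(0,\pi/2]$ presupposes Theorem~\ref{linear_relation} and the inversion symmetry restricting $\gamma,\beta$ to $[0,\pi)$, which is exactly the paper's setup, so nothing is lost.
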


\begin{proof}
Using the explicit form of the overlap \eqref{g_explicit}, from Eq.~\eqref{g_gamma} one can establish
\begin{align}
    \operatorname{Im} [e^{i\gamma}(e^{-i\beta n} -\cos^n\beta)]=0.
\end{align}
Substituting $\gamma=\pi-2\beta$ one arrives at 
\begin{align}
    \operatorname{Im} [e^{-i(n+2)\beta} -e^{-2i\beta}\cos^n\beta)]=0,
\end{align}
which is equivalent to
\begin{align}
    \sin(n+2)\beta=\sin2\beta\cos^n\beta.
    \label{eq_beta}
\end{align}
We solve this equation in the limit $n\to\infty$. In this limit $\sin2\beta\cos^n\beta\to 0$ independent of the value of $\beta$. Thus, the left hand side of Eq.~\eqref{eq_beta} tends to zero. This implies that the leading order solution scales as 
\begin{align}
    \beta=\dfrac{k\pi}{n+2} + o(n^{-1})
    \label{beta_k}
\end{align}
where $k<n$ is a positive integer (in principle, $n$-dependent). 
To recover the optimal constant $k$ we substitute Eq. \eqref{beta_k} to Eq.~\eqref{g2_beta} to obtain
\begin{align}
    \abs{g}^2 2^n = 1+4\cos^{n+2}\dfrac{k\pi}{n+2}\Big(\cos^{n}\dfrac{k\pi}{n+2}-(-1)^k\Big)
    \label{g2_k}
\end{align}
up to $o(1)$ terms. Finally, as cosine is monotonously decreasing in the interval $[0,\pi)$ it is evident that the overlap maximizes for smallest odd constant $k=1$. 
Therefore, the optimal parameter $\beta$ is given by
\begin{align}
    \beta=\dfrac{\pi}{n+2}+o(n^{-1}) = \dfrac{\pi}{n}+o(n^{-1}),
\end{align}
which implies $n\beta\to\pi$ and thus $\gamma=\pi-2\beta\to\pi$ when $n\to\infty$.
\end{proof}
\
\begin{remark}
In theorem \ref{opt_beta} the leading order solutions were found for optimal parameters. Higher order corrections in n $n$ are found from Eq.~\eqref{eq_beta}. For example, it is straightforward to show that 
\begin{align}
\beta = \dfrac{\pi}{n}-\dfrac{4\pi}{n^2}+O(n^{-3}),
\label{final_beta}\\
\gamma = \pi-\dfrac{2\pi}{n}+\dfrac{8\pi}{n^2}+O(n^{-3}).
\label{final_gamma}
\end{align}
\end{remark}
\begin{remark}
Expressions \eqref{final_beta} and \eqref{final_gamma} were used to demonstrate parameter concentrations \cite{akshay2021parameter}, i.e. the effect when optimal parameters for $n$ and $n+1$ qubits are polynomially close.
\end{remark}

Theorems \ref{linear_relation} and \ref{opt_beta} presented above provide state of the art analytical results for state preparation with $p=1$ depth QAOA circuit. 
For deeper circuits and more general settings, analysis becomes complicated and known results are mostly numerical.
Therefore, below we provide a list of numerical effects for deeper circuits which lack analytical explanations.

\section{Empirical findings missing analytical theory}
\subsection{Parameter concentration in $p\ge2$ QAOA}
From expression \eqref{ansatz} overlaps for circuits of different depths are related recursively as
\begin{align}
g_{p+1}(\bm\gamma,\bm\beta,\gamma_{p+1},\beta_{p+1}) = g_p(\bm \gamma,\tilde{\bm\beta})+g_p(\bm\gamma,\bm\beta)\cos^n\beta_{p+1}(e^{-i \gamma_{p+1}}-1), 
\end{align}
where $\tilde {\bm \beta} = (\beta_1+\beta_{p+1},\dots,\beta_p+\beta_{p+1})$.
This recursion was used in \cite{akshay2021parameter} for $p=2$ where it was shown that in the thermodynamic limit $n\to \infty$ the zero derivative conditions let one obtain solutions for which $n\beta\to \pi$ and $\gamma\to\pi$.  This establishes parameter concentrations \cite{akshay2021parameter}.
The effect was further confirmed numerically up to $n=17$ qubit and $p=5$ layers \cite{akshay2021parameter}. For arbitrary depth, parameter concentrations are conjectured, yet analytical confirmation remains open.

\subsection{Last layer behaviour}
Theorem \ref{linear_relation} establishes the linear relation between optimal parameters independent of the number of qubits $n$. Using a global training strategy for the same problem with $p\ge 2$ depth circuits, it was numerically observed \cite{akshay2021parameter} that optimal parameters depend on the depth, yet usually can be approximately described by some linear relation. In the present work, we have observed that the last layer is distinctively characterized by the very same linear relation $\gamma_p+2\beta_p=\pi$ stated in Theorem \ref{linear_relation}.
We numerically confirmed this up to $p=5$ layers and $n=17$ qubits, as shown in figure~\ref{linearity}. The effect remains unexplained analytically and could be the manifestation of some hidden ansatz symmetry.  

\begin{figure}[!tbh]
   \centerline{\includegraphics[clip=true,width=0.8\linewidth]{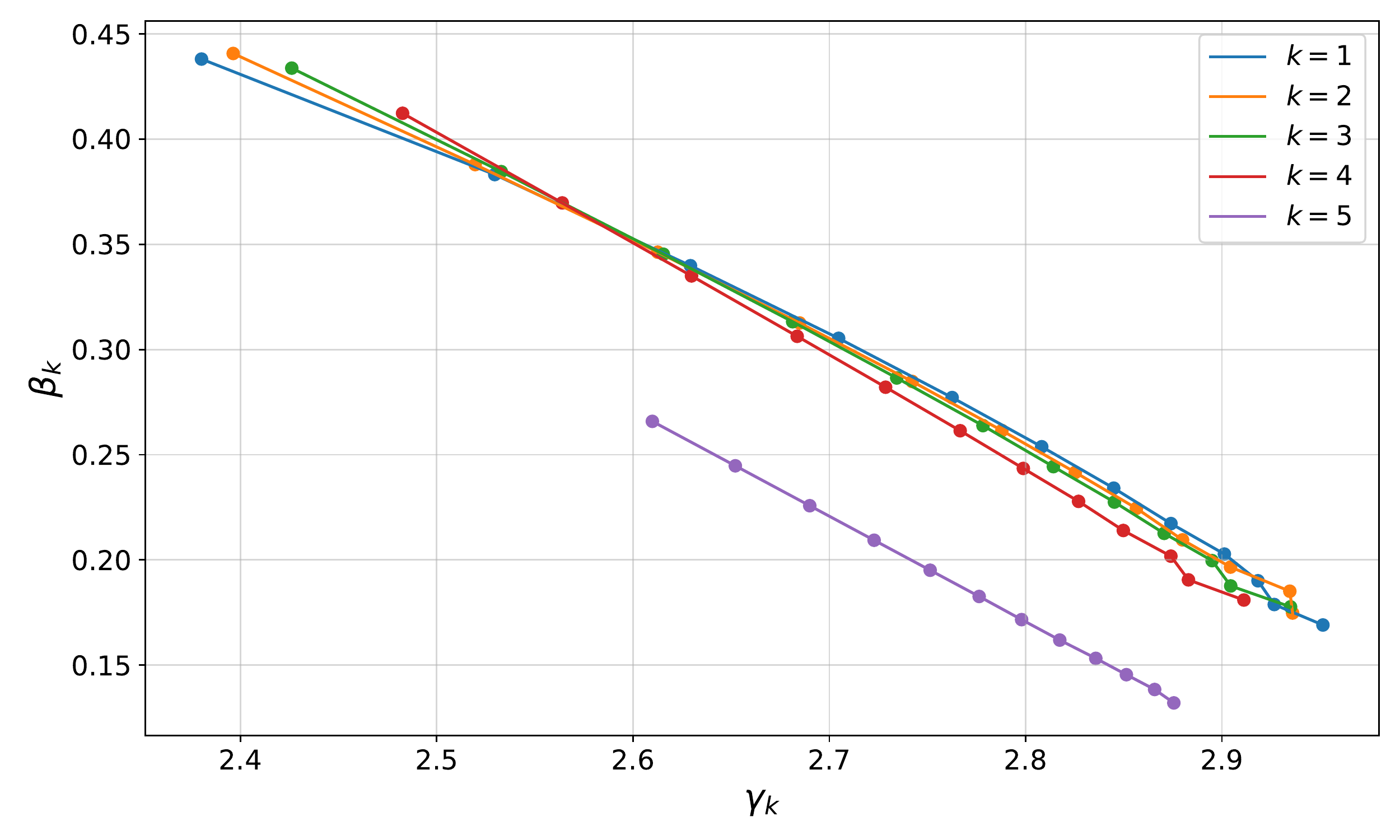}}
       \caption{Optimal angles of $p=5$ depth circuit for $n=6,\dots,17$. While the first layers can be approximately described by a linear relation, the last layer fits $\gamma_p+2\beta_p=\pi$. Moreover, the values of last layer's parameters are evidently distinct from the previous layers.}
    \label{linearity}
\end{figure}

\subsection{Saturation in layerwise training at $p=n$}
It was demonstrated \cite{campos2021training} that 
layerwise training {\it saturates}, meaning that past a critical depth $p^*$ overlap cannot be improved with further layer additions. 
Due to this effect, naive layerwise training performance falls below global training.
Training saturation in layerwise optimisation was reported in \cite{campos2021training} and confirmed up to $n=10$ qubits.
Most surprisingly, the saturation depth $p^*$ was observed to be equal to the number of qubits $n$. Two effects remain unexplained analytically.  Firstly does $p^* = n$.  Secondly, could one go beyond the necessary conditions in \cite{campos2021training} to explain saturations? 



\subsection{Removing saturation in layerwise training}
Any modification in the layerwise training process that violates the necessary saturation conditions can remove the system from its original saturation points. This idea was exploited in \cite{campos2021training}, where two types of variations were introduced for system sizes up to $n=7$: (i) undertraining the QAOA circuit at each iteration and (ii) training in the presence of random coherent phase noise. Whereas both modifications (i) and (ii) removed saturations at $p=n$ yet the reason remains unexplained.  


\section{Conclusion}
We have proven a relationship between optimal Quantum Approximate Optimisation Algorithm parameters for $p=1$ and in the thermodynamic limit, we recover optimal angles.  We demonstrated the effect of parameter concentrations for $p=1$ QAOA circuits using an operator formalism. Finally, we present a list of numerical effects, observed for particular system size and circuit depth, which are yet to be explained analytically. These unexplained effects include both limitations and advantages to QAOA.  While difficult, adding missing theory to these subtle effects would improve our understanding of variational algorithms.  

\section*{Acknowledgments}
The authors acknowledge support from the research project, {\it Leading Research Center on Quantum Computing} (agreement No.~014/20).

\section*{References}
\bibliography{refs.bib}

\bibliographystyle{unsrt}
\end{document}